\documentclass[sigconf,nonacm]{acmart}
\setcopyright{none}
\acmConference[]{}{}{}
\acmYear{2026}
\copyrightyear{2026}
\acmDOI{}
\acmISBN{}
\usepackage{amsmath,booktabs,graphicx,url,array}

\hypersetup{hidelinks,pdftitle={Hierarchical Rule Auditing with Exact Ternary GPU Correlation},pdfauthor={Shengwei Wang}}
\usepackage{algorithm,algpseudocode}
\newtheorem{auditproposition}{Proposition}
\newcommand{\botv}{\bot}
\newcommand{\Ind}{\mathbf{1}}

\title{Hierarchical Rule Auditing with Exact Ternary GPU Correlation}
\author{Shengwei Wang}

\begin{document}
\begin{abstract}
Auditing firewall logs against requirements mapped to controls and frameworks must retain overlapping obligations, preserve uncertainty from missing attributes and count distinct affected records. Direct, bitmap and relational methods can express this audit; a correlation implementation must preserve distinctions that ordinary match counts and rule-hit totals discard. We present an exact ternary correlation algorithm that compiles categorical conjunctions into one bounded-integer matrix product. A dominating false-predicate penalty separates known mismatches from missing evidence, while per-record hierarchy unions prevent duplicate counting. We prove their composition and handle high-cardinality endpoints through exact address classes and versioned labels. Evaluation compares eight CPU/GPU implementations, includes public traffic and 133 published source-address ACLs with generated observations, and measures hierarchy and policy-rebuild costs. On an NVIDIA L4, matched fused execution reduces latency by 8.8 to 11.2 percent relative to two-product correlation across five tested shapes. At 100,000 records and 1,024 requirements, it takes 12.28~ms versus 13.82~ms for two products and 12.29~ms for GPU bitmaps. Bitmaps remain faster in most regimes, with hierarchy operations and reductions dominating the measured fused-correlation stage intervals.
\end{abstract}
\ccsdesc[500]{Computing methodologies~Parallel algorithms}
\ccsdesc[300]{Security and privacy~Access control}
\ccsdesc[300]{General and reference~Performance}
\keywords{rule matching, correlation encoding, three-valued logic, GPU algorithms, hierarchical auditing}
\maketitle

\section{Introduction}
Organizations audit recorded firewall decisions against security requirements to identify connections that conflict with a supplied policy. The result must connect findings to controls and compliance frameworks, distinguish confirmed mismatches from unresolved applicability, and report both obligation hits and distinct affected records. Given a batch of logs and a versioned policy snapshot, the task is to evaluate every applicable independent obligation and aggregate its findings across overlapping reporting groups. Losing uncertainty or double-counting a record changes the answer, even when the underlying packet decision was recorded correctly.

For example, one allowed connection to a restricted database may violate both a boundary obligation and a database-specific obligation. A framework then has two requirement hits but one affected record. If the destination zone is missing, applicability may instead be unresolved. A control shared by two frameworks adds another overlap: each framework can report the record, but their distinct counts cannot simply be added to count records across frameworks.

Several established approaches can implement this audit. Packet classifiers and hardware match/action pipelines select actions and collect traffic metadata~\cite{packetclassification,gpuclassifier,docaflow}. Hash indexes and compiled bitmaps reduce candidate search or predicate work~\cite{dpdkhash,packetclassification}. Policy languages and relational engines can express conditions, joins and aggregates~\cite{cedar,opa,duckdb,gpujoin}. Logical matrix representations and tensor compilers establish weighted tests, thresholded scores and analytical operators as building blocks~\cite{logicmatrix,hummingbird,tqp}. These techniques motivate both the correlation approach and the direct and bitmap alternatives used in our evaluation.

However, translating this audit into correlation and aggregate operations exposes two information losses. An observed-match count can assign the same score to a missing constrained value and a known disallowed value, although their applicability states differ. Separately, per-rule totals discard the record overlap needed for distinct counts at controls and frameworks. Our two-product formulation resolves the first ambiguity with observed and possible match scores, but introduces separate score products and a correction. Preserving the required information in one product is the algorithmic question; whether that change reduces complete-audit latency is a separate performance question, because encoding and hierarchy reductions also consume time.

Our insight is to encode a conjunction through penalties rather than positive match counts. A true field contributes zero, an unknown field one, and a false field a weight greater than the number of fields. A single record--requirement product then separates all three states: no false field can be hidden by the unknown contributions. Extended one-hot field blocks implement this sum for categorical allowed sets. The bank of record--requirement correlations draws inspiration from CDMA receivers~\cite{tse}; its logical meaning is established by the encoding and thresholds.

Turning this insight into an exact, measurable audit requires three technical steps:
\begin{itemize}
\item \textbf{Preserve predicate semantics in one score.} Wildcards, empty allowed sets and missing values require different treatment. We construct the ternary weights, prove equivalence to the observed/possible formulation and bound intermediate integers so that the specified GPU arithmetic is exact (Section~\ref{sec:single}).
\item \textbf{Compose matching with compact inputs and overlapping reports.} IP addresses must not require one coordinate per observed host, and shared hierarchy paths must not duplicate findings. We integrate exact address classes and versioned labels, and prove that Boolean reachability and per-record unions produce the complete report under disjoint tiling. We also separate policy rebuilding, record re-encoding and historical re-audit costs (Section~\ref{sec:lifecycle}).
\item \textbf{Measure the benefit at the complete-report boundary.} We implement matched fused kernels for both matrix formulations and compare eight CPU/GPU implementations returning the same ten vectors. Controlled regimes, public traffic, published ACLs, hierarchy variation and policy-rebuild experiments expose costs beyond the score products. Stage diagnostics separate the effects of score evaluation and reporting work.
\end{itemize}

On the L4, the fused single-product method reduces complete prepared-audit latency by 8.8 to 11.2 percent relative to fused two-product execution across five tested shapes. At 100,000 records and 1,024 requirements, the two methods take 12.28 and 13.82~ms; GPU bitmaps take 12.29~ms, within observed variability of the single-product method. Bitmaps remain faster in most regimes. The contribution is an exact single-product construction with proven report composition and evidence locating its practical benefit: reducing score boundaries helps the matrix path, while hierarchy processing and launch overhead limit the improvement to the full audit.

\section{Problem Context and Existing Approaches}\label{sec:context}
\paragraph{Packet matching and hardware pipelines}
Multi-field classification has established range-search, bitmap, tuple-space and GPU approaches~\cite{packetclassification,gpuclassifier}. NuevoMatch combines learned range indexing with rule validation~\cite{nuevomatch}; TaNG combines a learned model with tuple-space search and GPU execution~\cite{tang}. A typical classification result selects a rule or action; the present audit instead retains every applicable independent obligation and unresolved case before reporting.

NVIDIA DOCA Flow exposes hardware-accelerated match/action pipes on BlueField, with packet modifications, forwarding or dropping, counters and metadata~\cite{docaflow}. A DPU can therefore contribute telemetry and is not restricted to firewall enforcement. The comparison boundary is the requested computation: a packet-action or counter result does not directly supply retrospective ternary applicability and overlapping hierarchy unions over a chosen log snapshot. A DPU implementation of that complete contract would be a relevant alternative. We do not benchmark one or infer a GPU advantage from packet-forwarding rates.

\paragraph{Hash indexes and direct predicates}
An exact-key hash lookup can reduce the candidates examined. DPDK's hash library uses a configurable hash for bucket selection and signatures, then compares full keys~\cite{dpdkhash}; equal signatures alone do not establish key equality. Prefixes, wildcards and allowed sets additionally require predicate-aware indexing or evaluation. Our direct and bitmap baselines evaluate predicates without cryptographic hashes. Comparing a hash-indexed classifier would additionally require its candidate-search stage under the same audit semantics. In this implementation, SHA-256 fingerprints identify policy/input versions for provenance, separately from predicate evaluation.

\paragraph{Policy engines and log analytics}
Cedar formalizes authorization semantics and differential validation~\cite{cedar}; OPA evaluates general structured policies~\cite{opa}. Such languages can express richer conditions than this fixed kernel. A log-analysis program can evaluate predicates, join rule mappings and aggregate records using relational operators. GPU join research and DuckDB provide relevant execution foundations~\cite{gpujoin,duckdb}. These are viable implementation routes, not evidence that the audit cannot be expressed elsewhere. The distinguishing contract here is explicit missingness, independent obligations and two kinds of hierarchy count. VeriFlow checks network-wide invariants from forwarding rules and topology~\cite{veriflow}, a different input and result from this log audit.

\paragraph{Matrix compilation and the remaining design question}
Tensor encodings of predicates and logical connectives predate this work~\cite{tensorlogic}. Sakama et al. encode Boolean conjunctions using matrix weights and thresholding in a logic-program fixpoint computation~\cite{logicmatrix}. Hummingbird's GEMM strategy compiles numerical tests into path scores, leaf selection and predicted classes~\cite{hummingbird}. TQP++ supplies compiler-based analytical execution, including predicated null handling and data-dependent joins and aggregation~\cite{tqp}. These precedents establish logical matrix representations and missing-value support. Building on them, our construction lowers ternary categorical conjunctions to one bounded-integer score product and composes it with outcome gates and hierarchy unions. It improves on our observed/possible two-product formulation; comparison with nonmatrix alternatives remains empirical, without an asymptotic improvement claim. Section~\ref{sec:cdma} states the limited CDMA connection.

\paragraph{Audit scope and supplied evidence}
Every applicable organizational obligation is evaluated independently against a recorded binary outcome. Overlapping allow and deny obligations can both apply; first-match and deny-overrides enforcement are different contracts. Section~\ref{sec:stanford} uses a bounded priority-preserving translation solely for the standard-ACL evaluation. Frameworks supply reporting mappings, not additional packet predicates. NIST AC-4, SC-7 and CM-7 motivate the authored examples~\cite{nist}; those mappings encode a supplied interpretation, not a certification or automatic translation of a framework's prose.

Policy, exemption, hierarchy and inventory facts are supplied inputs. Correctness is conditional on them and the recorded outcomes. The interface validates values and identifiers; fingerprints associate the report with its inputs but do not authenticate evidence or establish collection completeness. Policy approval, log collection and evidence custody belong to the surrounding deployment.

\section{Snapshot Semantics}\label{sec:model}
Let $X$ contain $N$ records with $K$ categorical fields, each in a finite domain $D_k$ or the missing value $\botv$. There are $R$ requirements, $C$ controls and $Q$ frameworks. The implementation uses $K=8$, with the first field identifying the tenant; $t(i)$ denotes that category or its missing sentinel. Domains are fixed by the supplied policy snapshot. The other fields select declared observation values, inventory labels or exact address classes, as described in Section~\ref{sec:endpoints}. Appendix~\ref{app:schema} catalogs the supported schema and example layouts. Record identifiers are unique within a snapshot. Each observed outcome $o_i\in\{0,1\}$ is not-permitted or permit, respectively. An input adapter must explicitly resolve other outcomes or reject them. Brackets denote indicator functions, applied elementwise to matrices; zero and one represent false and true.

Requirement $j$ gives each field either a wildcard or an allowed set $A_{jk}\subseteq D_k$. It also supplies an intended outcome $a_j$. Let $c_{jk}$ indicate a constrained field. Define field evaluation $p_{ijk}$ by
\[
p_{ijk}=\begin{cases}
\mathsf{T},&c_{jk}=0,\\
\mathsf{F},&c_{jk}=1,\ A_{jk}=\varnothing,\\
\mathsf{U},&c_{jk}=1,\ X_{ik}=\botv,\ A_{jk}\ne\varnothing,\\
\Ind[X_{ik}\in A_{jk}],&\text{otherwise}.
\end{cases}
\]
A conjunction is false if any component is false, true if all are true, and unknown otherwise. Thus an all-wildcard requirement matches even an entirely missing record. A constrained empty set never matches, including on missing evidence. A disjunction containing several acceptable values is one field predicate, not several required simultaneous matches.

Let $E_{jt}$ be an explicit exemption relation over requirements and tenant categories, including a separate missing-tenant entry. A missing tenant does not inherit another category's exemption. Let $g_{ij}=[o_i\ne a_j]\land\neg E_{j,t(i)}$. The binary matrices $V,U\in\{0,1\}^{N\times R}$ record confirmed violations and indeterminate applicability relevant to an outcome mismatch:
\begin{align}
V_{ij}&=[ (\bigwedge_k p_{ijk})=\mathsf{T}]\land g_{ij},\\
U_{ij}&=[ (\bigwedge_k p_{ijk})=\mathsf{U}]\land g_{ij}.
\end{align}
$V$ and $U$ are disjoint at the record--requirement level. A record can nevertheless have a confirmed finding for one requirement and an indeterminate finding for another. A matching intended outcome is not reported as a violation merely because an applicability field is absent. Overall missing-record counts are exposed separately.

\subsection{What ``distinct'' means}
Let $H\in\{0,1\}^{R\times C}$ map requirements to controls, and $G\in\{0,1\}^{C\times Q}$ map controls to frameworks. Mappings may be many-to-many. Compile rule-to-framework reachability as $B=[HG>0]$, applying a Boolean threshold before report aggregation. Without this step, multiple paths become accidental weights. A dense compilation costs $O(RCQ)$ once per policy snapshot; it computes reachability without enumerating individual paths.

For either class $Z\in\{V,U\}$, the engine returns five vectors:
\begin{align}
q_Z&=\mathbf{1}^{\mathsf T}Z,&
h_Z^{c}&=q_ZH,&h_Z^{f}&=q_ZB,\\
d_Z^{c}&=\mathbf{1}^{\mathsf T}[ZH>0],&&
d_Z^{f}=\mathbf{1}^{\mathsf T}[ZB>0].
\end{align}
Here $q_Z$ counts requirement hits, $h_Z$ counts hits associated with a hierarchy node, and $d_Z$ counts distinct affected record identifiers. Ten output vectors preserve both classes. The class-specific distinct sets may overlap at a control or framework, so their counts must not be added as disjoint populations.

\paragraph{Hierarchy size and cross-framework overlap}
This is a three-level directed graph with at most $RC+CQ$ declared edges, not an enumeration of rule combinations. For example, two paths from one rule through different controls to the same framework produce one entry in $B$. If a record violates two requirements that reach that framework, its hit count is two and its distinct count is one. A shared record may also appear once in another framework; their distinct counts cannot be summed to recover a union. The engine reports the declared $Q$ framework groups rather than all $2^Q$ subsets. It neither infers semantic equivalence between controls nor evaluates arbitrary framework intersections or recursively nested policies. Those require an explicit mapping or a separate query contract.

Distinct counts cannot generally be reconstructed from $q_Z$. Two requirements with one hit each may affect the same record or different records. Their marginal totals are identical while the distinct count differs. An implementation must therefore retain incidence information long enough to compute a union, or use an equivalent per-record reduction.

\begin{auditproposition}[Disjoint snapshot aggregation]\label{prop:batches}
For a fixed requirement and hierarchy snapshot, all ten vectors are additive across batches with disjoint record identifiers.
\end{auditproposition}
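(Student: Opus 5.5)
The argument rests on one observation: each of the ten vectors is a sum over records of a quantity that depends only on that record's row and the fixed snapshot. Fix the snapshot $(A,c,a,E,H,G)$ and hence $B=[HG>0]$. Let the combined batch $X$ be partitioned into batches $X^{(1)},\dots,X^{(m)}$ with disjoint record identifiers. Uniqueness of identifiers within the combined snapshot then means the rows of $X$ are exactly the disjoint union of the batch rows.

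The first step is to show that $V$ and $U$ are row-local. By the definition of $p_{ijk}$, the entry for $(i,j)$ depends only on $X_{ik}$ and requirement $j$. The gate $g_{ij}$ depends only on $o_i$, $a_j$ and $E_{j,t(i)}$. Hence row $i$ of $Z\in\{V,U\}$ computed on the combined batch equals row $i$ computed on whichever batch contains record $i$. After a row permutation, which leaves every column sum $\mathbf{1}^{\mathsf T}(\cdot)$ unchanged, $Z$ is the vertical stacking of the batch matrices $Z^{(1)},\dots,Z^{(m)}$.

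Next, the hit vectors are additive by linearity. Stacking gives $q_Z=\mathbf{1}^{\mathsf T}Z=\sum_b \mathbf{1}^{\mathsf T}Z^{(b)}=\sum_b q_Z^{(b)}$. Right-multiplying by the fixed matrices gives $h_Z^{c}=q_ZH=\sum_b h_Z^{c,(b)}$, and likewise $h_Z^{f}=q_ZB=\sum_b h_Z^{f,(b)}$. Since $H$ and $B$ are the same for every batch, no cross terms appear.

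The distinct vectors are the only place where care is needed, because thresholding is nonlinear. However, the threshold is applied entrywise to $ZH$ and $ZB$, and row $i$ of $ZH$ is $Z_{i\cdot}H$, a function of row $i$ alone. So $[ZH>0]$ is the stacking of the matrices $[Z^{(b)}H>0]$, and its column sums split as
\[
d_Z^{c}=\sum_b \mathbf{1}^{\mathsf T}[Z^{(b)}H>0]=\sum_b d_Z^{c,(b)},
\]
with the same argument for $d_Z^{f}$ using the precompiled $B$.

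The point to stress is where disjointness is used. Entry $c$ of $d_Z^{c}$ counts the distinct identifiers in the union over batches of $S_c^{(b)}=\{i: (Z^{(b)}H)_{ic}>0\}$. That union has size $\sum_b|S_c^{(b)}|$ exactly when the sets are pairwise disjoint, and disjoint record identifiers guarantee this. If one identifier appeared in two batches, that record would be counted twice. The same record-local reasoning also shows why additivity holds across batches but not across the columns $c$ or $f$. Within a batch, distinct sets at different controls or frameworks may overlap, so summing $d_Z^{c}$ or $d_Z^{f}$ over hierarchy nodes does not give a union. The proposition only claims additivity over records, which is consistent with this.

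Covering $Z=V$ and $Z=U$ with five vectors each gives all ten vectors, which completes the plan.
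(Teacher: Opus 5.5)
Your proposal is correct and follows the same argument as the paper. Each of the ten vectors is a column sum over records of a per-record quantity: a row of $Z$, or a row of the entrywise thresholds $[ZH>0]$ and $[ZB>0]$, which depends only on that record and the fixed snapshot, so partitioning the records into disjoint batches partitions every sum. You make explicit the row-locality and the point where disjoint identifiers are needed, which the paper compresses into two sentences.
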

\begin{proof}
Each output is a sum over records of a per-record hit count or membership indicator. Partitioning those records into disjoint batches partitions the sum. Overlapping identifiers or different policies invalidate this argument.
\end{proof}
The CLI checks uniqueness within one request. It does not implement deduplication across requests or distributed exactly-once ingestion.

\section{Exact Encodings and Arithmetic}\label{sec:encoding}
Partition a feature space of width $D=\sum_k |D_k|$ into disjoint field blocks. This concatenation does not enumerate the Cartesian product of attribute domains. Present values have one one-hot bit in their block of $F\in\{0,1\}^{N\times D}$; missing values have a zero block. The requirement matrix $\mathbf{R}\in\{0,1\}^{R\times D}$ sets bits for allowed categories in constrained fields and uses zero wildcard blocks. Let $\kappa_j=\sum_k c_{jk}$ count constrained fields, and let $L_{jk}$ indicate a constrained nonempty allowed set. Define the observed score $S$ and possible score $P$:
\[
S=F\mathbf{R}^{\mathsf T},\qquad W_{ik}=[X_{ik}=\botv],\qquad P=S+WL^{\mathsf T}.
\]

\begin{auditproposition}[Categorical score contract]\label{prop:score}
Each pair satisfies $0\le S_{ij}\le P_{ij}\le\kappa_j$. Applicability is true iff $S_{ij}=\kappa_j$, unknown iff $S_{ij}<P_{ij}=\kappa_j$, and false iff $P_{ij}<\kappa_j$.
\end{auditproposition}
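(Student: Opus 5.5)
The plan is to decompose both scores field by field and reduce the proposition to a per-field case analysis that mirrors the definition of $p_{ijk}$. Because the field blocks of $F$ and $\mathbf{R}$ are disjoint, $S_{ij}=\sum_k s_{ijk}$ with $s_{ijk}$ the inner product of record $i$'s block $k$ with requirement $j$'s block $k$. Likewise $(WL^{\mathsf T})_{ij}=\sum_k W_{ik}L_{jk}$, so $P_{ij}=\sum_k \pi_{ijk}$ with $\pi_{ijk}=s_{ijk}+W_{ik}L_{jk}$. I would then tabulate $(s_{ijk},\pi_{ijk})$ in the four cases of $p_{ijk}$:
\begin{itemize}
\item a wildcard has a zero requirement block and $L_{jk}=0$, giving $(0,0)$;
\item a constrained empty set has a zero block and $L_{jk}=0$, giving $(0,0)$;
\item a missing value under a nonempty constrained set has a zero record block and $W_{ik}L_{jk}=1$, giving $(0,1)$;
\item a present value under a constrained set gives $s=\pi=[X_{ik}\in A_{jk}]$, since the one-hot bit hits at most one allowed bit and $W_{ik}=0$.
\end{itemize}

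Every per-field pair therefore satisfies $0\le s\le\pi\le c_{jk}$, and summing over $k$ gives $0\le S_{ij}\le P_{ij}\le\kappa_j$. The key observation is that, restricted to constrained fields, $(s,\pi)$ equals $(1,1)$ exactly when $p=\mathsf T$, $(0,1)$ exactly when $p=\mathsf U$, and $(0,0)$ exactly when $p=\mathsf F$. For unconstrained fields $p=\mathsf T$ and the field contributes nothing to $\kappa_j$. Consequently $S_{ij}=\kappa_j$ iff every constrained field attains $s=1$, i.e.\ all fields are true. Similarly $P_{ij}=\kappa_j$ iff every constrained field attains $\pi=1$, i.e.\ no field is false.

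The three equivalences then follow from the conjunction semantics in Section~\ref{sec:model}:
\begin{itemize}
\item true iff $S_{ij}=\kappa_j$, which forces $P_{ij}=\kappa_j$ by the sandwich;
\item false iff some field is false, iff $P_{ij}<\kappa_j$;
\item unknown iff the conjunction is neither true nor false, iff $P_{ij}=\kappa_j$ and $S_{ij}\ne\kappa_j$, which by the sandwich means $S_{ij}<P_{ij}=\kappa_j$.
\end{itemize}
This partitions the outcomes exactly.

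The main obstacle is making sure the "iff" directions use tightness of the per-field bounds correctly. The argument relies on the fact that a sum of terms each at most $1$ reaches $\kappa_j$ only if every constrained term equals $1$. It also relies on the convention that the empty allowed set, although constrained, contributes $L_{jk}=0$, so it lowers $P_{ij}$ and is correctly classified as false even on missing evidence. I would state the per-field bound lemma explicitly before summing, so that both directions of each equivalence follow from the same termwise equality condition.
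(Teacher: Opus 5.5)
Your proof is correct and takes essentially the same route as the paper: each constrained field contributes $(1,1)$, $(0,1)$ or $(0,0)$ to $(S_{ij},P_{ij})$ according to whether it is true, unknown or false; summing gives the bounds, and tightness of the per-field bound gives the equivalences. Your explicit blockwise decomposition and your complement argument for the unknown case spell out steps the paper states briefly, and your vacuous-sum reasoning covers the all-wildcard case that the paper mentions separately.
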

\begin{proof}
A constrained field adds $(1,1)$ to $(S_{ij},P_{ij})$ if true, $(0,1)$ if unknown, and $(0,0)$ if false. The second case is supplied by $WL^{\mathsf T}$ for a missing value with a nonempty allowed set. An empty allowed set is always false. Wildcards contribute to neither score and are absent from $\kappa_j$. Summing these contributions gives the bounds. Reaching $\kappa_j$ in $S$ requires every constraint to be true. Reaching it only in $P$ requires no false constraint and at least one unknown; failing to reach it in $P$ requires a false constraint. The all-wildcard case has $S=P=\kappa_j=0$ and is true.
\end{proof}
The threshold $\kappa_j$ counts constrained fields, not set bits in $\mathbf{R}_j$: multiple allowed values within a field represent alternatives. A wildcard's truth is independent of evidence availability. The identity assumes single-valued fields and collision-free block dictionaries. After applicability is determined, the observed-outcome and exemption gates determine whether the record contributes to $V$ or $U$.

\paragraph{One control and one observed flow}
NIST SP~800-53 CM-7(b), Least Functionality, calls for prohibiting or restricting organization-defined ports and other system capabilities~\cite{nist}. For this illustration, instantiate its port restriction as ``deny destination port 23,'' with every other field a wildcard and no exemption. Data row 1611 of the public UCI firewall trace records source port 28441, destination port 23, action \texttt{allow}, two packets and 152 bytes~\cite{uci}. In the replay's destination-port dictionary $(23,3389,445,22,443,\mathrm{other})$, the record and requirement blocks are
\[
f_{\mathrm{port}}=r_{\mathrm{port}}=(1,0,0,0,0,0).
\]
All other requirement blocks are zero, giving
\[
S_{ij}=\langle F_i,\mathbf{R}_j\rangle=1=\kappa_j.
\]
Thus Proposition~\ref{prop:score} declares true applicability. The seven unavailable categorical fields are wildcards, so their missingness adds no unresolved constraint: $(WL^{\mathsf T})_{ij}=0$. The observed \texttt{allow} differs from the required deny, yielding $V_{ij}=1$ and $U_{ij}=0$. The port selection is the example's parameter choice; the source institution's policy is unavailable.

\subsection{One product for all three applicability states}\label{sec:single}
\paragraph{Why the observed score is insufficient}
Consider a requirement that constrains one field to $\{a\}$ and wildcards every other field. A record with that field missing and a record with a known value $v\ne a$ both have $S=0$ and $\kappa=1$. The first has unknown applicability and the second false applicability. Thus no decision function of $(S,\kappa)$ alone can recover the three-valued result for every valid input. Proposition~\ref{prop:score} retains the distinction through the possible score; the following construction retains it in an additional coordinate and a dominating false weight.

Augment each field block with one coordinate for its missing value. Every row of $F^+\in\{0,1\}^{N\times(D+K)}$ has exactly one active coordinate per field, including absent fields. Set $b=K+1$. For requirement $j$, define a weight for each possible value $v$, including $\botv$:
\[
\Theta_{jk}(v)=\begin{cases}
0,&c_{jk}=0,\\
b,&c_{jk}=1,\ A_{jk}=\varnothing,\\
1,&c_{jk}=1,\ v=\botv,\ A_{jk}\ne\varnothing,\\
0,&c_{jk}=1,\ v\in A_{jk},\\
b,&\text{otherwise}.
\end{cases}
\]
Concatenate the weights into $\Theta\in\{0,1,b\}^{R\times(D+K)}$ and compute
\[
\Xi=F^+\Theta^{\mathsf T}.
\]

\begin{auditproposition}[Single-product ternary correlation]\label{prop:single}
Let $f_{ij}$ and $u_{ij}$ count false and unknown constrained fields. Then
\begin{align}
\Xi_{ij}&=b f_{ij}+u_{ij}\\
&=b(\kappa_j-P_{ij})+(P_{ij}-S_{ij}).
\end{align}
Applicability is true iff $\Xi_{ij}=0$, unknown iff $0<\Xi_{ij}<b$, and false iff $\Xi_{ij}\ge b$.
\end{auditproposition}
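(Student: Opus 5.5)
The plan is to decompose $\Xi_{ij}$ field by field. This works because $F^+$ has exactly one active coordinate per field block, including the missing coordinate. For each field $k$, let $v=X_{ik}$, which may be $\botv$. The block inner product of $F^+_i$ with $\Theta_j$ then equals $\Theta_{jk}(v)$. Since the blocks are disjoint,
\[
\Xi_{ij}=\sum_k \Theta_{jk}(X_{ik}).
\]
This needs only that the dictionaries are collision-free and that each field is single-valued, the same assumptions stated after Proposition~\ref{prop:score}.

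Next I compare each term with the field evaluation $p_{ijk}$ of Section~\ref{sec:model}, case by case. For a wildcard, $p=\mathsf T$ and the weight is $0$. For a constrained empty set, $p=\mathsf F$ and the weight is $b$. For a missing value with a nonempty set, $p=\mathsf U$ and the weight is $1$. For a present value in $A_{jk}$, $p=\mathsf T$ and the weight is $0$. For a present value outside $A_{jk}$, $p=\mathsf F$ and the weight is $b$. The case order in $\Theta$ mirrors that of $p$, so a false field contributes $b$, an unknown field $1$, and a true field $0$. Summing gives $\Xi_{ij}=bf_{ij}+u_{ij}$.

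For the second equality I reuse the per-field contributions from the proof of Proposition~\ref{prop:score}. There a true constrained field adds $(1,1)$ to $(S,P)$, an unknown one adds $(0,1)$, and a false one adds $(0,0)$. Hence $P_{ij}-S_{ij}=u_{ij}$. Each constrained field is exactly one of true, unknown or false, so $\kappa_j=t_{ij}+u_{ij}+f_{ij}$ and $P_{ij}=t_{ij}+u_{ij}$, which gives $\kappa_j-P_{ij}=f_{ij}$.

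The threshold characterization uses $0\le u_{ij}\le\kappa_j\le K<b$; this dominance is the only delicate point.
\begin{itemize}
\item If $f_{ij}\ge1$, then $\Xi_{ij}\ge b$.
\item If $f_{ij}=0$, then $\Xi_{ij}=u_{ij}<b$. This is $0$ exactly when $u_{ij}=0$, so all constrained fields are true. It lies strictly between $0$ and $b$ exactly when some field is unknown.
\end{itemize}
These three cases match the conjunction semantics: false if any field is false, true if all are true, unknown otherwise. The all-wildcard case gives $\Xi_{ij}=0$ and is true. Alternatively, the same conclusion follows by translating the conditions of Proposition~\ref{prop:score} through the identity just proved. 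The main care point is confirming that the missing coordinate is used even when $A_{jk}=\varnothing$, where $\Theta$ assigns $b$ to $\botv$ as well. This agrees with $p=\mathsf F$ and with $L_{jk}=0$ in $P$.
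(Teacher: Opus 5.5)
Your proposal is correct and follows essentially the same route as the paper's proof. Both use the one-active-coordinate-per-block decomposition and match the weights $0,1,b$ to true, unknown and false fields (including the empty-set case on missing evidence). Both then take the second identity from the per-field contributions in Proposition~\ref{prop:score} and obtain the thresholds from the dominance $u_{ij}\le K<b$.
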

\begin{proof}
Exactly one coordinate is selected in each field block. Its weight is zero for a true field, one for an unknown field and $b$ for a false field. Wildcards contribute zero and empty constrained sets contribute $b$, including on missing evidence. Summing yields the first identity. Proposition~\ref{prop:score} gives $f_{ij}=\kappa_j-P_{ij}$ and $u_{ij}=P_{ij}-S_{ij}$, yielding the second. With no false field, the score is $u_{ij}\le K<b$: it is zero precisely when every field is true. Any false field contributes at least $b$, and all other contributions are nonnegative.
\end{proof}
For unit unknown weights and a decision threshold equal to the integer false weight $b$, the smallest weight that handles all $K$ fields being unknown is $K+1$. This is a bound within this encoding family, not a lower bound on all representations of three-valued logic.

In the observed port-23 example, the allowed value has weight zero, as do the seven wildcard fields, so $\Xi_{ij}=0$. If that constrained port were absent, $\Xi_{ij}=1$ would indicate unknown applicability. If a different known port were observed, $\Xi_{ij}=9$ would indicate false applicability. The outcome and exemption gates still determine whether a true or unknown match is reported.

The single product replaces the two products of Proposition~\ref{prop:score}. Their combined inner dimension is already $D+K$, so this transformation does not reduce the asymptotic arithmetic count. It removes a score-product boundary and a separate score correction while extending the one-hot encoding to missing coordinates. The experiments measure whether that change improves complete execution.

\subsection{From CDMA correlation to rule scores}\label{sec:cdma}
CDMA correlation receivers evaluate signal--signature inner products to recover transmitted symbols~\cite{tse}. This motivates evaluating a bank of record--requirement correlations: $S_{ij}=\langle F_i,\mathbf{R}_j\rangle$, or $\Xi_{ij}=\langle F_i^+,\Theta_j\rangle$ in the ternary construction. Propositions~\ref{prop:score} and~\ref{prop:single} specify how these scores recover logical applicability.

Each audit input directly encodes one record rather than a noisy superposition of transmitted symbols. Overlapping requirement templates can all match; their mutual cross-correlations do not enter a record--requirement score. The construction therefore requires neither orthogonal signatures nor interference cancellation. The connection is the bank of correlations, not an isomorphism of the complete communication and audit systems. Hierarchical reporting subsequently follows supplied edges in $H$ and $B$; those edges are not spreading-code relationships.

\subsection{Finite-precision bounds}
The positive scores lie in $[0,K]$ and the single-product score in $[0,K(K+1)]$. With $K=8$, its weights and every partial sum are integers between zero and 72. Binary16 represents these exactly. Per-record hierarchy products sum at most $R\le1024$ binary contributions, also below binary16's consecutive exact-integer bound of 2048. Final reductions use signed 64-bit integers, and valid inputs must keep their aggregate counts in range. Reducing millions of records directly in binary16 would violate these bounds.

These arguments establish arithmetic exactness for the specified input limits and products. CuPy dispatches the products to its installed library. Instruction profiles separately establish which products actually use Tensor Cores on the measured device; a floating-point API alone does not demonstrate hardware execution.

\section{Implementation}
\subsection{High-cardinality endpoints: labels and subnets}\label{sec:endpoints}
An endpoint's raw IP address is retained in the input snapshot but need not receive its own one-hot coordinate. For label predicates, the adapter resolves selected keys such as zone or role through the supplied inventory. Each selected key is a separate single-valued categorical field. Source and destination lookups include tenant and network identifiers, so reuse of an address in another network does not transfer labels. A snapshot declares a half-open validity interval in integer UTC seconds; observations outside it, missing endpoints and absent label keys produce missing categorical evidence. The adapter rejects malformed addresses, duplicate inventory endpoints and multi-valued labels. It does not infer an application role from a subnet or port.

Explicit address predicates follow a separate exact path. Let $\mathcal C$ contain every CIDR selector used for one endpoint field. Define
\[
a\sim b\ \Longleftrightarrow\ \forall c\in\mathcal C:\ [a\in c]=[b\in c].
\]
Addresses with identical membership signatures share one category. This is a one-field application of the predicate-equivalence principle used by Atomic Predicates~\cite{atomic}. The compiler sweeps the start and exclusive-end boundaries of all selectors, separately for IPv4 and IPv6, then merges intervals with identical signatures. It handles nested prefixes, unions of prefixes, and individual hosts represented by /32 or /128 selectors. A known address outside every selector receives a valid nonmatching class; an absent address remains missing.

\begin{auditproposition}[Exact address-class reduction]
Replacing each known address by its class and each rule's CIDR union by the classes it accepts preserves that field's predicate value.
\end{auditproposition}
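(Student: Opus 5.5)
The plan is to show that each rule's CIDR union is a union of equivalence classes of $\sim$, and that the sweep compiler produces exactly those classes, so membership is preserved in both directions. Fix one endpoint field with selector set $\mathcal C$. For a requirement $j$ that constrains this field, its allowed set is a union $A_j=\bigcup_{c\in\mathcal C_j}c$ for some $\mathcal C_j\subseteq\mathcal C$. I would define the class-level allowed set as $\hat A_j=\{[a]: [a]\cap A_j\neq\varnothing\}$. The claim then reduces to showing, for every known address $a$, that $a\in A_j$ iff $[a]\in\hat A_j$. Missing addresses stay $\botv$, and empty or wildcard constraints are untouched, so the case split in the definition of $p_{ijk}$ carries over unchanged.

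\textbf{Step 1 (saturation).} The forward direction is immediate. For the converse, suppose $b\in[a]\cap A_j$. Then $b\in c$ for some $c\in\mathcal C_j$. Since $a\sim b$ gives $[a\in c]=[b\in c]$, we get $a\in c\subseteq A_j$. Hence $A_j$ is saturated under $\sim$, and the equivalence holds. Addresses outside every selector have the all-zero signature. They form one class, which meets no $A_j$ and is therefore a valid nonmatching category, as the statement requires.

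\textbf{Step 2 (the compiler realizes $\sim$).} This step is the main obstacle, because the proposition concerns the implemented classes rather than the abstract quotient. I would argue as follows. Each CIDR selector is an interval $[s,e)$ in the integer address space of its family. IPv4 and IPv6 are treated separately, and no selector spans both families, so addresses of different families may be placed in separate classes without changing any signature equality used below. Sorting all boundaries partitions the space into elementary intervals. On each elementary interval, no boundary is crossed, so every membership bit, and hence the signature, is constant. The only thing that matters for correctness is that addresses with different signatures never share a class. That holds because merging only joins intervals with identical signatures. Conversely, merging all intervals with equal signatures makes the classes exactly the $\sim$-classes. That property only affects compactness, and Step 1 remains valid for any partition that refines $\sim$. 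Nested prefixes, prefix unions, and /32 or /128 hosts need no special treatment, since each is just another interval.

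Finally, I would combine the two steps with the encoding. Replacing $X_{ik}$ by its class and $A_{jk}$ by $\hat A_{jk}$ leaves $\Ind[X_{ik}\in A_{jk}]$ unchanged. The $\mathsf T/\mathsf F/\mathsf U$ cases depend only on $c_{jk}$, on emptiness, and on $\botv$. Emptiness is also preserved, because $\hat A_j=\varnothing$ iff $A_j=\varnothing$. Therefore $p_{ijk}$ is preserved, and Propositions~\ref{prop:score} and~\ref{prop:single} apply verbatim to the reduced field.
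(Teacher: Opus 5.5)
Your proposal is correct and follows the paper's argument. Your saturation step is exactly the paper's observation that every selector's membership indicator, and hence any union of selectors, is constant on a class, with wildcards, empty sets and the missing sentinel left unchanged. Your Step~2 (the boundary sweep and signature merge yield a partition refining $\sim$) and your explicit check that $\hat A_j=\varnothing$ iff $A_j=\varnothing$ go beyond the paper, which takes the classes to be the $\sim$-classes by definition; both are sound and tie the proposition to the implemented compiler.
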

\begin{proof}
Every selector's membership indicator is constant within a class by definition. The disjunction of any supplied selectors is therefore constant there as well. Empty predicates and wildcards retain their original definitions; missing addresses retain the missing sentinel.
\end{proof}
With $m$ distinct selectors, there are at most $2m+2$ boundary intervals across the two address families before identical signatures are merged. Lookup uses binary search over those intervals. The width depends on policy distinctions rather than the number of observed addresses, although many host-specific rules can still produce many classes. The current backend rejects more than 4096 classes in a field or total width above 16384. It uses no collision-prone feature hashing or arbitrary subnet coarsening.

Label predicates and address predicates remain distinct policy choices. A CIDR predicate tests the literal address; tenant or other contextual restrictions must be explicit in the policy. The inventory's network key scopes label resolution. The adapter emits the existing eight-field snapshot and records raw-policy, inventory and observation fingerprints alongside categorical fingerprints. Raw-address parsing and inventory indexing remain part of the pipeline and are measured separately from matching.

\subsection{Compilation and tiled correlation algorithm}
A policy snapshot fixes dictionaries, intended outcomes, tenant exemptions and hierarchy mappings. It compiles the weight matrix $\Theta$ and Boolean rule-to-framework reachability $B=[HG>0]$. The interface supports eight fields, at most 1024 requirements, 512 controls, 64 frameworks and total categorical width 16384. The implementation rebuilds a compiled snapshot when its policy changes; the dependency and re-audit costs are specified below.

Algorithm~\ref{alg:correlation} processes disjoint tiles. Static matrices are uploaded once. Each invocation uploads categorical inputs and outcomes, encodes each tile, computes one score product, applies the gates and accumulates the five vectors for each finding class. Rule totals are downloaded for final hit-count products; Boolean hierarchy reductions yield distinct-record totals. The timer stops only when all ten host vectors are available.

\begin{algorithm}[t]
\caption{Single-product hierarchical correlation audit}\label{alg:correlation}
\begin{algorithmic}[1]
\Require Compiled $\Theta,H,B$, intent $a$, exemptions $E$
\State Initialize ten 64-bit output vectors to zero
\For{disjoint tile $I$ of records}
  \State $F^+\gets$ extended one-hot encoding of $X_I$
  \State $\Xi\gets F^+\Theta^{\mathsf T}$
  \State $g_{ij}\gets[o_i\ne a_j]\land\neg E_{j,t(i)}$
  \State $V\gets[\Xi=0]\land g$
  \State $U\gets[0<\Xi<b]\land g$
  \For{$Z\in\{V,U\}$}
    \State $q_Z\gets q_Z+\mathbf{1}^{\mathsf T}Z$
    \State $d_Z^c\gets d_Z^c+\mathbf{1}^{\mathsf T}[ZH>0]$
    \State $d_Z^f\gets d_Z^f+\mathbf{1}^{\mathsf T}[ZB>0]$
  \EndFor
\EndFor
\For{$Z\in\{V,U\}$}
  \State $h_Z^c\gets q_ZH$; $h_Z^f\gets q_ZB$
\EndFor
\State \Return $(q_Z,h_Z^c,h_Z^f,d_Z^c,d_Z^f)$ for both classes
\end{algorithmic}
\end{algorithm}

\begin{auditproposition}[Complete report correctness]\label{prop:complete}
Under the snapshot and arithmetic assumptions, Algorithm~\ref{alg:correlation} returns the ten vectors defined in Section~\ref{sec:model}.
\end{auditproposition}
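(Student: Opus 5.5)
The proof reduces to three facts: each tile computes the correct incidence matrices, each accumulated vector is a sum over records, and the final hit-count products can be moved outside that sum. I would argue them in that order.

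\textbf{Tile level.} Fix a tile $I$ and a pair $(i,j)$ with $i\in I$.
\begin{itemize}
\item By construction, $F^+$ has exactly one active coordinate per field block. So Proposition~\ref{prop:single} applies to $\Xi=F^+\Theta^{\mathsf T}$.
\item The finite-precision bounds of the encoding section show that every partial sum of $\Xi_{ij}$ is an integer in $[0,72]$. Such integers are exactly representable, so the computed $\Xi_{ij}$ equals the mathematical value and the thresholds $[\Xi=0]$ and $[0<\Xi<b]$ are evaluated without error.
\item Proposition~\ref{prop:single} then identifies $[\Xi_{ij}=0]$ with true applicability and $[0<\Xi_{ij}<b]$ with unknown applicability.
\item The gate computed in the algorithm is literally $g_{ij}$ from Section~\ref{sec:model}.
\end{itemize}
Hence the tile's $V$ and $U$ equal the rows of the snapshot matrices $V,U$ indexed by $I$.

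\textbf{Per-record hierarchy products.} For each record row $Z_i$, the product $(Z_iH)_c$ is a sum of at most $R\le 1024$ binary terms. This stays below 2048, so it is exact in binary16 and the threshold $[Z_iH>0]$ is exact. The same holds for $[Z_iB>0]$, since $B$ is binary.

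\textbf{Additivity over tiles.} The tiles are disjoint and cover all records. I would invoke Proposition~\ref{prop:batches}, or reprove it directly, for the three vectors accumulated in the loop:
\begin{itemize}
\item $\sum_I \mathbf{1}^{\mathsf T}Z_I=\mathbf{1}^{\mathsf T}Z=q_Z$;
\item $\sum_I\mathbf{1}^{\mathsf T}[Z_IH>0]=\mathbf{1}^{\mathsf T}[ZH>0]=d^c_Z$, because thresholding acts row by row and row blocks of $ZH$ are $Z_IH$;
\item the same argument gives $d^f_Z$ with $B$ in place of $H$.
\end{itemize}
The accumulators are 64-bit and, by the validity assumption, stay in range, so these sums are exact.

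\textbf{Hit counts after the loop.} Linearity gives $h^c_Z=q_ZH$ and $h^f_Z=q_ZB$ from the completed $q_Z$. This matches the definitions, because $B$ was precompiled as $[HG>0]$, the same reachability matrix used in Section~\ref{sec:model}. These products are taken in integer arithmetic over the downloaded totals, so no precision issue arises.

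\textbf{Main obstacle.} The delicate step is keeping the exactness assumptions aligned with each operation. The algorithm must threshold per record before summing across records: thresholding $\sum_I Z_IH$ would compute something different from $d_Z$. And the binary16 bound must only be applied to per-record products, never to cross-record reductions. I would state these explicitly as the "arithmetic assumptions" invoked in the statement.
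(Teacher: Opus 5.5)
Your proposal is correct and follows the same route as the paper's proof. Proposition~\ref{prop:single} with the explicit gates yields $V$ and $U$ tile by tile, and the per-record thresholded products $[ZH>0]$ and $[ZB>0]$ give the distinct counts. Proposition~\ref{prop:batches} justifies summing over disjoint tiles, $h_Z$ comes from multiplying the completed totals $q_Z$ by $H$ and $B$, and the finite-precision bounds keep every intermediate integer exact. Your version spells out what the paper compresses into its final sentence: the binary16 bound is applied only to per-record products (at most $72$ for scores, at most $R\le 1024$ for hierarchy sums), while all cross-record reductions are carried in 64-bit integers.
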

\begin{proof}
Proposition~\ref{prop:single} and the explicit gates produce exactly $V$ and $U$. Boolean reachability $B$ counts each rule--framework association once, regardless of path multiplicity. For any record, $[ZH>0]$ and $[ZB>0]$ indicate membership in the corresponding node's union of findings. Summing these indicators therefore counts distinct records, while summing $Z$ and multiplying its totals counts requirement hits. Proposition~\ref{prop:batches} permits summation over disjoint tiles. The finite-precision bounds preserve every intermediate integer.
\end{proof}

With tile size $T$, score work is $O(NR(D+K))$ and dense hierarchy work is $O(NR(C+Q))$. Static matrix storage is $O(R(D+K+C+Q))$; working storage is $O(T(D+K+R+C+Q))$, in addition to $O(NK)$ input storage. Tiling bounds incidence storage without changing the report. These are array-payload bounds, not measurements of runtime allocator overhead.

\subsection{Policy versions, changes and re-auditing}\label{sec:lifecycle}
A framework version alone does not determine the update rate: organizations can change predicates, exemptions and inventory evidence independently. We assume a snapshot is fixed during one audit, without assuming that custom policies change rarely. For $J$ equal-sized batches reusing a compiled snapshot, average cost is $T_{\mathrm{eval}}+T_{\mathrm{build}}/J$. The complete-JSON interface rebuilds on every request; it does not implement a persistent policy cache.

The dependencies distinguish three changes. A reporting-edge edit changes $H$, $G$ or $B$ but need not change predicate weights. A predicate, outcome or exemption edit changes applicability or its gate. A dictionary/CIDR-selector edit can additionally change record encodings; inventory updates can change resolved labels even with fixed dictionaries. The current implementation conservatively recompiles the full policy and backend for each supplied revision, including hierarchy-only edits; no incremental update speedup is claimed. Dense reachability preparation costs $O(RCQ)$ and the correlation weights occupy $O(R(D+K))$, besides validation and gate preparation.

An existing report remains a result for its recorded version. Restating historical logs under a new version incurs re-audit cost. For hierarchy-only changes, per-rule totals can regenerate hit totals, but cannot reconstruct distinct counts: incidence or the original records is needed. The implementation releases tile incidence, so it reruns the records for a complete revised report. Changed address classes or labels also require resolving the relevant records again.

A separate Apple M4 Pro study performs 420 measured full rebuilds across five policy shapes, three seeds and four edit types (Appendix~\ref{app:updates}). For a framework-edge edit with $D=128$, policy compilation plus host correlation-weight construction takes 4.38~ms at $(R,C,Q)=(256,32,4)$, 18.16~ms at $(1024,128,4)$, and 36.70~ms at $(1024,512,64)$. These are CPU preparation costs, excluding GPU upload, record resolution and historical replay; they must not be interpreted as complete GPU update latency.

\subsection{Direct predicates and two-product correlation}
Native C++ partitions records among four workers, applies the outcome/exemption gate, and evaluates each requirement's fields with early rejection on a definite false predicate. Unknown fields set a flag; a later false field still rejects the pair. Each worker accumulates rule totals and per-record Boolean flags for reachable controls and frameworks, then reduces its 64-bit totals. It does not materialize dense record--rule incidence.

The direct GPU path uses one thread per record and requirement pair. The two-product path implements $S=F\mathbf R^{\mathsf T}$ and $P=S+WL^{\mathsf T}$. Both emit the same three-state byte representation as the single-product path. All four GPU matchers, including the rule-bit-vector path below, share the same class-specific hierarchy products and reductions. This isolates the matching representation within GPU comparisons.
Two additional variants fuse each matrix path's score thresholds and outcome/exemption gates into one CUDA kernel. They preserve the same products and report reductions; Section~\ref{sec:fusion} compares both fused paths with the original implementations and the direct/bitmap controls.

\subsection{Compiled rule-bit-vector baseline}\label{sec:rulebits}
For every field value, including missing, precompute a bitmap over requirements that evaluate true and a bitmap over requirements that are not false. A wildcard sets both bits. A constrained empty set sets neither. Missing evidence under a constrained nonempty set sets only the possible bit. Outcome and tenant-exemption combinations similarly compile to gate bitmaps.

For each record, AND its $K$ true bitmaps and the gate bitmap to obtain $T_i$, and separately AND its possible bitmaps and the gate to obtain $P_i$. Confirmed requirements are $T_i$; indeterminate requirements are $P_i\mathbin{\&}\neg T_i$. This follows directly from three-valued conjunction: every field must be true for confirmation and none may be false for possibility. Unused bits in a final word remain zero.

The four-worker C++ version uses 64-bit words and enumerates set bits for the same per-record hierarchy unions as the direct native implementation. Its predicate stage costs $O(NK\lceil R/64\rceil)$ word operations, followed by work for surviving findings and their hierarchy edges. The GPU version uses 32-bit words: one warp handles one record and one requirement word; its first lane intersects the field masks, broadcasts the result, and each lane emits its own requirement's state. It then uses the common GPU hierarchy path. Bitmap storage is $O((D+K)\lceil R/w\rceil)$ words plus exemption gates and hierarchy mappings. These implementations provide an independently built flat bitmap baseline; they do not reproduce a published classifier's complete data structure.

\subsection{Snapshot interface and reference implementations}
The CLI validates a complete JSON snapshot and returns ten vectors, record and missing-record counts, content fingerprints, backend identity and optional bounded witnesses. A witness names the input record and requirement, observed and required outcomes, and unresolved field indices. Diagnostic witnesses use an independent CPU pass and are disabled in the timing study.

A NumPy reference packs confirmed and indeterminate membership over records, then performs bitwise unions and population counts for hierarchy nodes. This differs from the compiled rule bitmaps, whose bit positions are requirements. A scalar list/set interpreter provides an independent correctness oracle. Retained SQL and earlier packed-predicate measurements provide supplementary context~\cite{duckdb}; the main six-backend study concentrates on the new construction and competitive compiled alternatives.

\section{Evaluation Method}
The study tests correctness, one versus two score products, comparison with direct and bitmap matching, execution regimes, and the costs of complete inputs and changed policy snapshots.

\subsection{Correctness and provenance}
The scalar oracle evaluates original allowed-value lists and constructs hierarchy reachability using sets. Forty fixtures cover wildcards, constrained empty sets, exemptions, overlapping hierarchy paths, all-missing and empty inputs, exhaustive two-field predicates, 24 seeded cases, requirement-word boundaries and the 1024-rule limit. An additional score check compares every one of the $3^8=6561$ field-state vectors against the algebraic formula and all three thresholds. Together the fixtures check 157151 record--requirement scores. GPU tests also compare raw binary16 product scores with exact integer products. Six authored snapshots check complete reports, identifiers and witnesses across the six backends; duplicate identifiers are rejected. Twelve additional raw-endpoint fixtures check the new algorithm after CIDR and label resolution against an independent address/inventory interpreter.

Every warmup and measured invocation must match all ten reference vectors exactly. Raw results retain samples, seeds, input/policy fingerprints, output vectors, backend order and source hashes. The GPU driver checks a successful hardware-validation record and matching source hashes before timing. Profiling is separate from timing.

\subsection{Workloads and measurement scope}
Five main operating points span 10000 to one million records, 64 to 1024 requirements and categorical widths 128 and 1024. Each uses seeds 42, 123 and 456. Approximately half the fields are constrained, allowed-value density ranges from 0.2 to 0.8, two percent of field values are missing and three percent of tenant-exemption entries are set. Each requirement maps to two controls and each control to two frameworks. Both the categorical records and requirements in these synthetic workloads are generated. The distributions control shape and selectivity; they are not calibrated enterprise policies. Timings are measured executions on these generated inputs.

Each case uses two warmups and seven measured invocations, with backend order shuffled deterministically. Main tables report the median of the three seed medians. Prepared evaluation excludes static policy compilation/upload but includes dynamic input upload, encoding, matching, hierarchy aggregation and all ten host results. Setup costs and all repetitions are retained separately. The default GPU tile is 8192 records for every matcher.

The complete-snapshot experiment uses exactly the prepared 100000-record, 256-requirement, seed-123 input. Timing starts with serialized categorical JSON in host memory and includes decoding, validation, hashing, policy/backend construction, evaluation and output serialization. It excludes file/network I/O and diagnostic witnesses. A public five-rule replay tests a smaller policy on recorded observations. A separate CPU-only rebuild experiment changes framework edges, rule-to-control edges and predicates; Section~\ref{sec:lifecycle} and Appendix~\ref{app:updates} define its timing boundary.

Controlled ablations pad the same predicates to widths 512 and 2048 without changing outputs, change missingness to zero or twenty percent, and use GPU tiles of 512 and 32768. The baseline is width 128, two percent missingness and tile 8192. CPU runs of the tile cases are repeat controls because those evaluators do not use the tile argument. A separate retained endpoint experiment measures inventory compilation and raw-address resolution; its CPU timings are not compared as speedups against the GPU host.

\section{Results}\label{sec:results}

\subsection{Platforms and matched execution}

The six-backend study uses an AWS g4dn.xlarge with one NVIDIA T4 (15~GiB device memory), four Intel Xeon Platinum 8259CL vCPUs (two exposed cores with SMT) and 16~GiB host memory. Ubuntu 22.04 runs Python 3.10, NumPy 1.26.4, CuPy 13.3.0, GCC 11.4 and NVIDIA driver 595.91.07. The retained environment identifies the CUDA runtime and device library configuration. Native C++17 and the rule-bit-vector baseline both use \texttt{-O3} and four workers. The study contains 966 measured invocations, with all ten outputs checked each time. The separate Apple M4 Pro CPU study contributes 322 measurements using the same input fingerprints; CPU--GPU ratios below use only the T4 host.

\begin{table*}[t]\centering\small\caption{Prepared complete-audit latency on the shared T4 host, in milliseconds. Entries are medians of three seed medians, each based on seven measured invocations. Every GPU method includes encoding where applicable, dynamic transfers, matching, common hierarchy processing and host results. CPU methods use four workers.}\label{tab:algorithm-main}\begin{tabular}{rrr rrrrrr}\toprule $N$ & $R$ & $D$ & CPU direct & CPU bits & GPU direct & Two products & One product & GPU bits\\\midrule

10,000 & 64 & 128 & 3.62 & 0.76 & 2.46 & 3.75 & 3.66 & 2.46 \\

100,000 & 256 & 128 & 164.98 & 15.55 & 13.76 & 22.25 & 21.67 & 13.80 \\

100,000 & 1,024 & 128 & 775.00 & 60.38 & 30.20 & 62.26 & 60.56 & 26.88 \\

100,000 & 256 & 1,024 & 176.10 & 16.65 & 13.53 & 22.31 & 21.70 & 13.52 \\

1,000,000 & 256 & 128 & 1669.30 & 145.64 & 124.05 & 204.18 & 198.52 & 123.72 \\

\bottomrule\end{tabular}\end{table*}

At one million records and 256 requirements, single-product correlation takes 198.52~ms, compared with 204.18~ms for two products, 124.05~ms for direct GPU matching and 123.72~ms for GPU bitmaps. The corresponding CPU bitmap and direct times are 145.64 and 1669.30~ms. The CPU bitmap baseline improves on the direct loop by 11.46$\times$; comparing only with that loop would overstate the advantage relative to the stronger baseline.

The two-product/single-product latency ratios across the five main points range from 1.02 to 1.03$\times$. These are descriptive ratios of medians. Single-product correlation is slower than both direct GPU and GPU bitmap evaluation at every main point. The one-product construction preserves the audit semantics and changes execution boundaries; it does not establish dense correlation as the preferred matcher for these workloads.

On the M4 Pro, the one-million-record CPU direct/bitmap ratio is 10.40$\times$. This separately supports the importance of the compiled baseline, without combining timings from different hosts into an accelerator speedup.

\paragraph{Variation in the product-consolidation result}
Across the 15 seeded main cases, the single-product median is lower than the two-product median. Observed repetition ranges overlap in one case. Table~\ref{tab:variation} reports the ratio of medians of seed medians and conditional bootstrap intervals. We resample the three seed cases and, within each selected case, independently resample seven repetitions for each backend; 20000 replicates yield percentile intervals. The procedure describes the observed seeds and repetitions on this host. Three seeds and one execution session do not characterize variation across deployments or correlated timing drift.
\begin{table}[t]\centering\small\caption{Two-product/single-product latency ratios and conditional 95\% bootstrap percentile intervals from the retained T4 measurements. Values above one favor the single-product construction.}\label{tab:variation}\begin{tabular}{rrrcc}\toprule $N$ & $R$ & $D$ & Ratio & Interval\\\midrule
10,000 & 64 & 128 & 1.024 & [1.013, 1.036]\\
100,000 & 256 & 128 & 1.027 & [1.020, 1.034]\\
100,000 & 1,024 & 128 & 1.028 & [1.024, 1.031]\\
100,000 & 256 & 1,024 & 1.028 & [1.020, 1.037]\\
1,000,000 & 256 & 128 & 1.029 & [1.026, 1.033]\\
\bottomrule\end{tabular}\end{table}

\subsection{Pipeline and public-data costs}

\begin{table}[t]\centering\small\caption{Exact-input pairing at $N=100000$, $R=256$, $D=128$, seed 123 on the T4 host. Both columns are medians of seven invocations in milliseconds. Full includes JSON decoding, validation, hashing, compilation and serialization.}\label{tab:algorithm-pipeline}\begin{tabular}{lrr}\toprule Method & Prepared & Full snapshot\\\midrule

CPU direct & 163.42 & 836.29 \\

CPU bits & 15.19 & 704.07 \\

GPU direct & 13.76 & 677.59 \\

Two products & 22.25 & 680.67 \\

One product & 21.71 & 698.95 \\

GPU bits & 13.83 & 688.96 \\

\bottomrule\end{tabular}\end{table}

For that exact paired input, the CPU-bitmap/GPU-bitmap ratio changes from 1.10$\times$ prepared to 1.02$\times$ with full processing. The complete path rebuilds policy data structures each time; its extra cost therefore includes both common input work and backend-specific compilation. Small full-path differences are not interpreted as a stable ranking.

The public five-rule replay retains 65478 observations and produces confirmed rule counts $(17,144,0,0,7)$ and indeterminate counts $(0,0,0,38,0)$ across all methods. Prepared CPU direct and bitmap times are 3.09 and 2.25~ms. Single-product correlation takes 13.38~ms; GPU bitmaps take 8.52~ms. These measurements cover observed traffic under the explicitly hypothetical policy described in Section~\ref{sec:cases}.

\subsection{Representation and batching sensitivity}

Figure~\ref{fig:algorithm-ablation} changes vocabulary padding and tile size while holding logical inputs and outputs constant. Smaller tiles increase invocation and reduction overhead. Padding retains the exact predicates but adds dense coordinates, exposing work that direct predicates and rule-word intersections need not perform. The raw data also retain missingness ablations; those change applicability and early-rejection opportunities, rather than isolate layout alone.

\begin{figure*}[t]\centering\includegraphics[width=.9\textwidth]{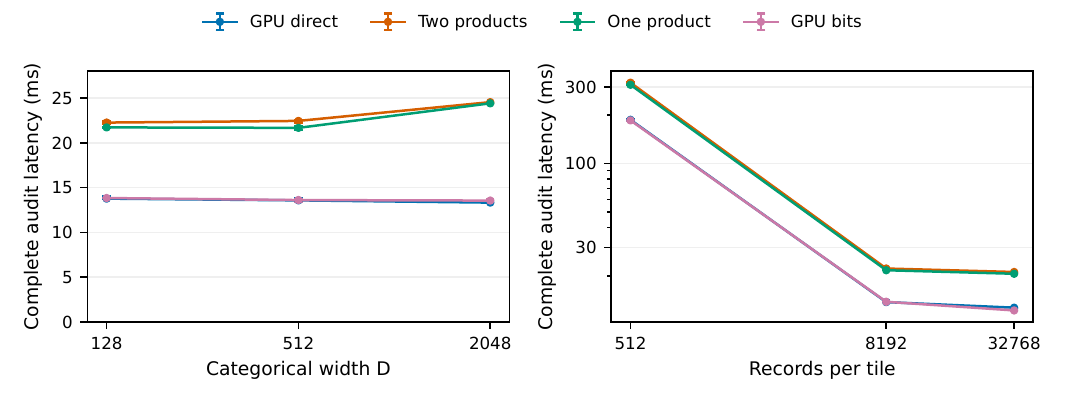}\caption{T4-host controlled ablations at 100000 records and 256 requirements. Medians of seven invocations; whiskers are observed minima and maxima, not confidence intervals. Left: unused-category padding, tile 8192. Right: tile size, width 128; logarithmic latency axis. Both horizontal axes are logarithmic. Logical records, policy predicates and all ten outputs remain fixed within each panel.}\label{fig:algorithm-ablation}\Description{Two plots compare direct GPU, two-product correlation, one-product correlation and GPU bitmap audit latency as unused vocabulary width or GPU tile size changes.}\end{figure*}

\subsection{Instruction evidence}

Nsight Compute profiles one separately validated audit with 8192 records, 256 requirements and width 128. The single-product path launches 44 kernels and the two-product path 46. Both paths execute 851,968 Tensor Core instructions at this shape, consistent with changing product boundaries rather than the total dense arithmetic. Their first matrix kernels record nonzero Tensor Core instruction counts; all four methods also use Tensor Core instructions in their shared hierarchy products. Direct matching is therefore not a Tensor-Core-free complete pipeline. Profiling establishes execution of those instructions, not utilization efficiency or a latency advantage. Instrumented runs are excluded from the timing tables.

\section{Fusion, Execution Regimes and an L4 Study}\label{sec:fusion}
The score construction exposes a second implementation choice: combine threshold tests and the outcome/exemption gate in one CUDA kernel. We apply this fusion to both matrix formulations, retain both original paths, and keep the direct and bitmap controls. All paths preserve the same ten vectors and GPU hierarchy operations. Fusion is an established implementation technique; this comparison isolates its interaction with the one-product encoding.

The eight-method study uses one AWS g6.xlarge with an NVIDIA L4 (23034~MiB visible device memory), four AMD EPYC 7R13 vCPUs (two cores with SMT), 16~GiB host memory, CUDA toolkit 12.9, CuPy 13.3.0 and driver 595.91.07. It repeats the five operating points above and tests 21 controlled regimes at 65536 records, plus the published ACL workload in Section~\ref{sec:stanford}. Each case has seeds 42, 123 and 456, two warmups and seven measured invocations per method. Method order is independently shuffled within each repetition. The 4536 measured invocations all match the reference vectors; separate stage diagnostics contribute 80 instrumented invocations. All latency comparisons within this study use the L4 host.

\begin{table*}[t]\centering\small
\caption{Prepared audit latency on the L4 host, in milliseconds: medians of three seed medians, each with seven interleaved repetitions. Both matrix variants fuse thresholds and outcome/exemption gates. Original one retains the unfused single-product path. CPU methods use four workers; complete raw results include CPU direct and unfused two-product execution.}\label{tab:fusion}
\begin{tabular}{lrrr rrrrrr}\toprule Workload & $N$ & $R$ & $D$ & CPU bits & GPU direct & GPU bits & Original one & Fused two & Fused one\\\midrule
Synthetic & 10,000 & 64 & 128 & 0.60 & 1.94 & 1.93 & 2.90 & 2.28 & 2.08\\
Synthetic & 100,000 & 256 & 128 & 12.34 & 10.68 & 10.76 & 16.77 & 12.94 & 11.68\\
Synthetic & 100,000 & 1,024 & 128 & 49.68 & 13.07 & 12.29 & 27.95 & 13.82 & 12.28\\
Synthetic & 100,000 & 256 & 1,024 & 13.16 & 10.74 & 10.70 & 17.00 & 13.19 & 11.82\\
Synthetic & 1,000,000 & 256 & 128 & 119.06 & 98.24 & 98.47 & 157.00 & 120.54 & 107.75\\
Published ACLs & 100,000 & 341 & 207 & 5.72 & 10.80 & 10.85 & 17.05 & 13.23 & 11.92\\
\bottomrule\end{tabular}\end{table*}

At 100000 records and 1024 requirements, fusion reduces single-product latency from 27.95 to 12.28~ms (2.28$\times$). Fused two-product execution takes 13.82~ms, so the one-product formulation retains an advantage after both paths receive the same optimization. GPU bitmaps take 12.29~ms. Their bitmap/single-product ratio has a conditional 95\% bootstrap interval of [0.988,1.009], consistent with comparable latency. The interval uses 20000 resamples of seeds and paired repetition indices; it describes this session rather than a population-wide performance guarantee.

The controlled regimes vary width (16--2048), requirements (64--1024), allowed-value density, four versus eight constrained fields, missingness and hierarchy fanout. Outcome/exemption gates are enabled throughout these controlled cases. Categories are sampled uniformly and allowed-set sizes round the requested density to an integer, with at least one allowed value. These are controlled stress cases, not calibrated enterprise distributions. Appendix~\ref{app:regimes} retains every regime, including a joint compact-domain/dense-predicate case.

GPU bitmaps have lower point medians in 20 of the 21 regimes; the remaining 1024-rule case is again comparable within the conditional interval. No tested regime establishes a robust overall correlation advantage over bitmaps. Separate CUDA-event diagnostics attribute 79.2--81.3\% of the sum of median fused-single stage intervals to hierarchy operations and reductions across the four profiled cases. These intervals include launch/idle gaps and are not pure kernel-work times or substitutes for uninstrumented latency. They explain why reducing score work alone has limited effect on the complete report.

The practical result is an exact matrix lowering whose execution improves substantially with fusion and can approach bitmap latency at larger rule counts. Compiled bitmaps outperform correlation in most tested regimes; the published ACL replay also favors CPU bitmaps. The measurements support choosing execution by workload and reporting cost, rather than inferring a speed advantage from the correlation formulation or Tensor Core use.

\paragraph{Hierarchy scaling with fixed predicates}
A separate T4 study holds $N=65536$, $R=256$, $D=128$, categorical records and predicates fixed within each seed and varies ten reporting graphs (Appendix~\ref{app:hierarchy}). All 1680 measured invocations reproduce their complete reports, with identical per-rule hits across graphs. Increasing controls from 32 to 512 at four frameworks changes fused-single latency from 10.87 to 10.99~ms; at 32 controls, increasing frameworks from four to 64 gives 9.22~ms. These are reporting-graph sensitivities with fixed applicability, not a claim of linear wall-clock scaling; kernel choice and fixed overhead can make dimension changes nonmonotonic. Comparisons remain within this host.

\subsection{Endpoint costs beyond the matcher}\label{sec:endpoint-results}
The retained Apple M4 Pro endpoint study uses 100000 observations per case, with two warmups and seven measurements per stage. With 16 fixed CIDR selectors, increasing distinct source IPs and inventory size from 100 to 100000 leaves 17 address classes and total width $D=36$. Increasing the selectors to 512 instead yields 513 classes and $D=532$. Thus exact compression removes dependence on observed address diversity in the first experiment, while retaining policy distinctions in the second.

At 100000 addresses and 16 selectors, full JSON-to-JSON processing takes 1595.60~ms and prepared four-worker direct evaluation 1.15~ms. Inventory preparation and raw-address resolution dominate this authored workload. Labels still require an inventory proportional to supplied endpoints. The performance cases use disjoint IPv4 /24 selectors; IPv6, nested prefixes, host selectors and stale/missing labels are separate correctness fixtures. No live inventory service is measured.

\section{Public Traffic and Policy Validation}\label{sec:cases}
We retain the UCI Internet Firewall Data source~\cite{uci}, including its CC BY 4.0 attribution and fingerprint. The source contains 65532 rows. Permit maps from \texttt{allow}; \texttt{deny} and \texttt{drop} map to not-permitted. We exclude \texttt{reset-both}, whose meaning is not equivalent to a binary initial decision, and explicitly report the excluded count.

Five hypothetical organization rules refer to destination ports 23, 3389, 445, 22 and 443. The port-22 requirement also depends on source zone, which the public records do not supply; its relevant observations remain indeterminate. Unavailable fields are left missing, not fabricated from port numbers. Port numbers do not establish a transport protocol or the actual application. The final requirement is an explicitly authored availability condition, not a conclusion derived from NIST text.

The replay is a reproducible test on public observations under stated hypothetical policies. The dataset does not include the source institution's approved policy or endpoint IP/label inventory, so it cannot validate IP resolution or establish that the institution was noncompliant. Source-row witnesses enable inspection of the transformation and result. This five-requirement replay exercises a small policy; the synthetic workloads cover larger rule sets.
\subsection{Public router policies with preserved priority}\label{sec:stanford}
The ATPG repository publishes a Stanford backbone configuration snapshot captured in June 2011~\cite{atpgdata}. From its 16 router configurations, we extract all 133 numbered IPv4 standard ACLs: 208 permit/deny entries, excluding remarks. These source-address filters provide policy-derived predicate structure beyond the hypothetical requirements used with the public firewall trace. Extended ACLs, interface attachment, routing and NAT are outside this adapter's scope.

An ACL uses first-match semantics and an implicit deny. Evaluating its original entries as independent obligations would change those semantics. For the source-address set $A_j$ of entry $j$ within one ACL, the adapter instead compiles
\[
 A_j^\star=A_j\setminus\bigcup_{\ell<j}A_\ell,
 \qquad A_{\mathrm{default}}=D_{\mathrm{IPv4}}\setminus\bigcup_j A_j.
\]
The original action labels each effective set; the default action is deny. These sets partition the known IPv4 inputs: any address belongs to its earliest matching entry or the default. Replacing them by exact selector-membership classes preserves that partition. This is a compatibility translation for the evaluation, not a new priority-classification algorithm. It requires a known address and ACL context; it does not define priority semantics for missing evidence.

The resulting snapshot has 341 obligations, 68 IPv4 address classes and total categorical width 207. Fields two and three hold ACL context and source-address class; a constant tenant category and five unconstrained singleton fields fill the unused positions without asserting source evidence. ACLs serve as report groups and routers as their parent groups, without assigned regulatory meaning. A direct interpreter checks the original address/wildcard predicates in source order. Across all 133 contexts and 281 address-boundary probes, its 37373 decisions and winning entries agree with the compiled partition; all ten report vectors are then checked independently of the correlation implementation.

The performance replay generates 100000 observations per seed. Half sample predicate boundaries or interiors and half sample the IPv4 space uniformly; ten percent of reference outcomes are independently flipped to inject reportable mismatches. Every invocation must reproduce the reference report. These are actual published policy predicates with generated observations, not paired operational traffic or evidence that the source network violated its policies. The separate UCI replay retains the paper's observed-traffic example.

\section{Limitations}
The single-product lowering reduces product boundaries while retaining dense arithmetic proportional to vocabulary width; compiled bitmaps can exploit word-level parallelism and sparse surviving findings. The T4 study covers the original six methods and the L4 study additionally evaluates fused matrix paths. These are comparisons within each host, not a scaling study across devices. Seven repetitions and three seeds provide descriptive latency evidence, not a universal ranking. All GPU methods use a common hierarchy path; a specialized bitmap aggregation pipeline could change the comparison.

Synthetic policies are not calibrated to enterprise selectivity or overlap. The public traffic replay has five hypothetical requirements and lacks the institution's approved policy and endpoint inventory. The router corpus supplies historical source-address ACLs, but its observations and outcome mismatches are generated; 1614 nonstandard numbered-ACL statements are outside the adapter. The endpoint study uses authored observations and supplied labels. Paired operational policies and observations would be needed to establish deployment relevance. The study does not measure analyst productivity, streaming recovery, multi-GPU scaling or peak resident memory.

The core accepts eight single-valued fields and binary outcomes. The standard-ACL adapter compiles source-address priority for known inputs; arbitrary extended priority policies, numeric ranges, multi-valued labels, temporal obligations and waiver workflows remain unsupported. Label validity intervals constrain supplied evidence but do not collect historical inventories. Complete JSON timings exclude file/network I/O and witness extraction. Fingerprints identify snapshots without authenticating the evidence or certifying regulatory compliance.

\section{Conclusion}
Retrospective rule auditing requires more than selecting a packet action: it must preserve uncertain applicability, all independent obligations and distinct-record unions across supplied reporting groups. The ternary correlation insight encodes false predicates with a dominating weight so that one product separates all three applicability states. We prove its composition with outcome gates and Boolean hierarchy aggregation, integrate exact address classes and labels, and evaluate implementations that return the same complete report. Matched fusion improves the matrix path, but compiled bitmaps remain faster in most tested regimes. Explicit rebuilding and re-audit boundaries expose the cost of changing policies. The result is an exact, evaluated matrix lowering with a defined scope and measured limits.
\label{sec:main-end}

\bibliographystyle{ACM-Reference-Format}
\bibliography{paper_references}
\clearpage
\appendix
\raggedbottom
\section{Supported Attribute Schema and Relationships}\label{app:schema}
The schema describes the implemented adapter and the supplied examples. It is a catalog of supported inputs, not a universal ontology of network attributes. A snapshot selects eight single-valued fields; only the first position, tenant, is fixed. The remaining positions select observation values, endpoint labels or address classes. Table~\ref{tab:field-kinds} gives their interpretation. Named vocabularies are supplied with the policy; the categorical engine receives integer identifiers and a separate missing sentinel.

\begin{table}[H]\centering\small
\caption{Supported field kinds. All field predicates use wildcard, empty-set or allowed-set semantics from Section~\ref{sec:model}.}\label{tab:field-kinds}
\begin{tabular}{@{}p{.12\columnwidth}p{.40\columnwidth}p{.38\columnwidth}@{}}\toprule
Kind & Source and representation & Missingness and scope\\\midrule
Tenant & Declared observation tenant; explicit vocabulary & Missing has its own exemption entry.\\
Value & Named observation scalar, e.g., protocol or service; explicit vocabulary & Absent key or null is missing; undeclared values are rejected.\\
Label & Source/destination label, e.g., zone or role; explicit vocabulary & Missing or out-of-date evidence stays missing; lookup is scoped by tenant and network.\\
CIDR & Source/destination address; exact selector-membership class, including host selectors & Absent differs from known nonmatch; literal prefix membership does not implicitly restrict context.\\\bottomrule
\end{tabular}\end{table}

\subsection{Concrete field layouts}
The authored endpoint example in the artifact uses the ordered fields in Table~\ref{tab:example-fields}. Field names describe that policy, not fixed kernel positions. A supplied service name is not inferred from a port. Each selected label key occupies its own field; multi-valued labels are rejected.

\begin{table}[H]\centering\small
\caption{One-based positions in the authored endpoint example.}\label{tab:example-fields}
\begin{tabular}{rll}\toprule
Position & Field & Kind / selected source\\\midrule
1 & tenant & tenant\\
2 & protocol & value / protocol\\
3 & source IP & CIDR / source\\
4 & destination IP & CIDR / destination\\
5 & service & value / service\\
6 & source zone & label / source zone\\
7 & destination zone & label / destination zone\\
8 & destination role & label / destination role\\\bottomrule
\end{tabular}\end{table}

The public firewall replay uses a different declared layout: position five holds destination-port categories 23, 3389, 445, 22, 443 and other. Position three is the source-zone predicate used by one requirement; its evidence is unavailable. All seven fields other than destination port remain missing. This replay supplies neither endpoint addresses nor an inventory. The retained real-flow example applies the same port encoding to one observation; it does not fill the missing fields with inferred labels.

The standard-ACL adapter uses position two for ACL context and position three for the source-address class. The other positions are unconstrained placeholders, including a constant tenant category. These placeholders do not infer observation attributes. The adapter requires known IPv4 addresses and context; its generated observations are separate from the public traffic replay.

\subsection{Context and reporting relations}
Record identifiers are unique within the input and support distinct-record counts and witnesses; they are not predicate coordinates. Observed outcomes and required outcomes are binary and form the mismatch gate. Tenant exemptions are a separate requirement--tenant relation. Observation timestamps select whether the supplied inventory is valid within its half-open interval; they are not temporal rule predicates. Inventory versions and SHA-256 fingerprints identify supplied evidence. They do not assert label authenticity.

Address containment, inventory lookup and reporting edges have different meanings. Exact CIDR classes preserve all supplied selector memberships, including nested prefixes and host exceptions. Labels are supplied facts scoped by tenant and network and are not automatically derived from subnet containment. Requirements map many-to-many to controls and frameworks; Boolean reachability removes duplicate paths before distinct-record unions. These relationships do not imply a shared attribute hierarchy.

\subsection{Domain boundaries}
Finite domains apply to a policy snapshot, not to an enumeration of every possible flow. Encoding width is the sum of field-domain sizes, with one extra missing coordinate per field in the ternary encoding. Address-class width depends on the supplied selectors; inventories still grow with their endpoint populations. Updating a dictionary or selector set requires recompilation. The implementation bounds and exact address-class construction are specified in Section~\ref{sec:endpoints}. Arbitrary numeric ranges, multi-valued labels and temporal obligations require further compilation or semantics and are outside this implementation.

\clearpage
\section{Complete Controlled Regime Results}\label{app:regimes}
All 21 regimes use 65536 records and three seeds. Except where varied, $R=256$, $D=128$, four constrained fields, allowed-value density $a=0.5$, missingness $m=0.02$ and two control edges per rule. Here $k$ is the number of constrained fields. The compact-dense case uses $D=16$, $k=8$, $a=1$ and $m=0$. Every method returns the complete ten-vector audit.
\begin{table}[H]\centering\small\caption{All controlled L4 regimes. CPU and GPU columns are bitmaps; Two and One are fused matrix variants. Milliseconds, medians of three seed medians.}\label{tab:all-regimes}
\begin{tabular}{@{}lrrrr@{}}\toprule Regime & CPU & GPU & Two & One\\\midrule
base & 10.03 & 6.69 & 8.11 & 7.30\\
$D=16$ & 9.78 & 6.80 & 8.19 & 7.35\\
$D=32$ & 9.97 & 6.78 & 8.16 & 7.36\\
$D=64$ & 9.99 & 6.79 & 8.14 & 7.35\\
$D=512$ & 10.14 & 6.75 & 8.20 & 7.38\\
$D=2048$ & 10.18 & 6.70 & 8.28 & 7.42\\
$R=64$ & 3.19 & 6.67 & 8.06 & 7.21\\
$R=1024$ & 41.78 & 8.33 & 9.35 & 8.27\\
$a=0.25,k=4$ & 3.46 & 6.73 & 8.12 & 7.33\\
$a=0.9,k=4$ & 68.66 & 6.75 & 8.20 & 7.31\\
$a=1.0,k=4$ & 113.88 & 6.76 & 8.16 & 7.34\\
$a=0.25,k=8$ & 2.57 & 6.77 & 8.15 & 7.34\\
$a=0.5,k=8$ & 3.54 & 6.79 & 8.19 & 7.35\\
$a=0.9,k=8$ & 41.60 & 6.79 & 8.17 & 7.32\\
$a=1.0,k=8$ & 113.98 & 6.76 & 8.22 & 7.35\\
$m=0.0$ & 9.32 & 6.71 & 8.13 & 7.37\\
$m=0.2$ & 17.63 & 6.77 & 8.17 & 7.33\\
$m=0.8$ & 77.06 & 6.75 & 8.14 & 7.44\\
fanout 1 & 7.80 & 6.72 & 8.16 & 7.29\\
fanout 8 & 16.39 & 6.76 & 8.15 & 7.28\\
compact dense & 113.81 & 6.71 & 8.12 & 7.56\\
\bottomrule\end{tabular}\end{table}
\begin{table}[H]\centering\small\caption{GPU-bitmap/fused-single ratios. Ratios above one favor correlation. Conditional paired-bootstrap intervals are descriptive, without a population or multiple-comparison guarantee.}\label{tab:regime-intervals}
\begin{tabular}{@{}lrr@{}}\toprule Regime & Ratio & Conditional 95\% interval\\\midrule
base & 0.917 & [0.912,0.928]\\
$D=16$ & 0.925 & [0.907,0.929]\\
$D=32$ & 0.921 & [0.901,0.941]\\
$D=64$ & 0.924 & [0.909,0.938]\\
$D=512$ & 0.915 & [0.901,0.922]\\
$D=2048$ & 0.903 & [0.893,0.937]\\
$R=64$ & 0.925 & [0.900,0.930]\\
$R=1024$ & 1.007 & [0.996,1.015]\\
$a=0.25,k=4$ & 0.919 & [0.908,0.933]\\
$a=0.9,k=4$ & 0.924 & [0.908,0.941]\\
$a=1.0,k=4$ & 0.921 & [0.896,0.938]\\
$a=0.25,k=8$ & 0.923 & [0.901,0.942]\\
$a=0.5,k=8$ & 0.924 & [0.911,0.941]\\
$a=0.9,k=8$ & 0.927 & [0.913,0.931]\\
$a=1.0,k=8$ & 0.921 & [0.903,0.942]\\
$m=0.0$ & 0.910 & [0.898,0.937]\\
$m=0.2$ & 0.923 & [0.905,0.948]\\
$m=0.8$ & 0.907 & [0.890,0.948]\\
fanout 1 & 0.921 & [0.898,0.926]\\
fanout 8 & 0.928 & [0.917,0.945]\\
compact dense & 0.887 & [0.879,0.938]\\
\bottomrule\end{tabular}\end{table}
The artifact retains direct CPU/GPU and both unfused matrix timings, every sample and method order, setup measurements and all output vectors. Setup timings describe these implementations after correctness preflight; they do not establish a general policy-compilation advantage.

\clearpage
\section{Policy Rebuild Experiment}\label{app:updates}

This CPU-only study measures the existing full-rebuild path on an Apple M4 Pro with 14 physical cores and 48~GiB memory, Python 3.12 and NumPy 2.5.3. BLAS thread settings are four. It is separate from the T4/L4 execution studies; preparation costs are not combined with timings from those hosts into an amortized latency claim.

Five $(R,C,Q,D)$ shapes use three seeds and four changes: rebuilding the unchanged document (U), editing one control-to-framework edge (F), editing one rule-to-control edge (H), and toggling one field value in $\lceil0.01R\rceil$ requirements (P). Edited documents are constructed before timing. Cases are shuffled deterministically; each has two warmups and seven measured rebuilds. Each rebuild times full \texttt{Policy.compile}, then host correlation-weight construction and CPU bitmap-runner construction in separate sequential stages. These are current implementation costs, not optimized incremental algorithms or a claim that one representation has an intrinsic compilation advantage.

The 420 measured rebuilds provide 1260 stage intervals. After every warmup and measured rebuild, both constructed implementations evaluate 33 generated records and match all ten vectors from the independent scalar oracle (1080 complete-report comparisons). Compilation includes policy validation, canonicalization, fingerprints and dense Boolean reachability. GPU initialization/upload, record encoding, historical replay and editing the document are excluded. No deployment update-frequency distribution is assumed.

\begin{table}[H]\centering\small\setlength{\tabcolsep}{3pt}\caption{CPU preparation after policy changes, milliseconds. Medians of three seed medians, each with seven samples. Compile includes common policy preparation; Weights and Bits are additional host construction stages. Total is measured compile-plus-weights, not the sum of separately aggregated medians.}\label{tab:policy-updates}

\begin{tabular}{@{}lcrrrr@{}}\toprule $R/C/Q/D$ & Edit & Compile & Weights & Bits & Total\\\midrule

64/8/4/128 & F & 0.62 & 0.51 & 1.62 & 1.12\\

64/8/4/128 & P & 0.58 & 0.50 & 1.58 & 1.10\\

64/8/4/128 & H & 0.57 & 0.50 & 1.64 & 1.10\\

64/8/4/128 & U & 0.57 & 0.50 & 1.60 & 1.07\\

256/32/4/128 & F & 2.40 & 1.99 & 6.27 & 4.38\\

256/32/4/128 & P & 2.84 & 2.29 & 6.98 & 5.13\\

256/32/4/128 & H & 2.43 & 2.05 & 6.39 & 4.51\\

256/32/4/128 & U & 2.37 & 1.98 & 6.34 & 4.33\\

256/32/4/1024 & F & 15.09 & 2.09 & 6.93 & 17.29\\

256/32/4/1024 & P & 15.19 & 2.08 & 6.90 & 17.36\\

256/32/4/1024 & H & 15.23 & 2.07 & 6.94 & 17.38\\

256/32/4/1024 & U & 15.12 & 2.17 & 7.02 & 17.24\\

1024/128/4/128 & F & 9.96 & 8.23 & 25.41 & 18.16\\

1024/128/4/128 & P & 9.87 & 8.08 & 25.17 & 17.96\\

1024/128/4/128 & H & 9.78 & 8.10 & 25.36 & 17.89\\

1024/128/4/128 & U & 9.85 & 8.19 & 25.17 & 18.02\\

1024/512/64/128 & F & 28.73 & 8.15 & 25.98 & 36.70\\

1024/512/64/128 & P & 28.79 & 8.28 & 26.19 & 37.10\\

1024/512/64/128 & H & 28.96 & 8.16 & 26.16 & 37.09\\

1024/512/64/128 & U & 28.89 & 8.20 & 26.38 & 37.16\\

\bottomrule\end{tabular}\end{table}

\newpage
\section{Hierarchy Scaling with Fixed Predicates}\label{app:hierarchy}

This separate eight-method study uses an AWS g4dn.xlarge with an NVIDIA T4, four vCPUs and 16~GiB nominal host memory. Software is Python 3.10, NumPy 1.26.4 and CuPy 13.3.0 with CUDA toolkit 12.9. Device, CPU, driver and package details are retained with the raw results. It does not combine this host with either earlier GPU session into a speedup ratio.

All cases use $N=65536$, $R=256$, $D=128$, four constrained fields, allowed density 0.5 and missingness 0.02. Within each seed, categorical records, predicates, outcomes and exemptions are unchanged; only the reporting edges change. The graph parameters below are $(C,Q,e,f)$, where every rule has $e$ control edges and every control has $f$ framework edges. The one-framework/eight-control-edge case creates eight paths from each rule to that framework; Boolean reachability preserves one association.

Ten graphs and three seeds produce 30 cases. Eight methods use two warmups and seven timed invocations each, with method order independently shuffled per repetition: 1680 measured complete-audit invocations. Cases run in seed order and the listed graph order; intervals between graph cases and hardware variability limit interpretation of small cross-case differences. Each checks all ten vectors, and per-rule hits remain identical across graphs. Before timing, 30 smaller cases are compared against the independent scalar/set interpreter, including maximum tested control/framework counts; a separate hand-checked fixture checks duplicate paths and shared records. All six GPU variants and three CPU/reference implementations pass those checks.

\begin{table}[H]\centering\small\setlength{\tabcolsep}{3pt}\caption{Prepared audit latency by hierarchy shape on one T4 host. Milliseconds, medians of three seed medians. CPU and GPU bits are compiled bitmaps; Two and One use matched fused matrix epilogues.}\label{tab:hierarchy-scaling}

\begin{tabular}{@{}lrrrrr@{}}\toprule $(C,Q,e,f)$ & CPU bits & GPU direct & GPU bits & Two & One\\\midrule

32/4/2/2 & 11.78 & 10.00 & 9.93 & 12.01 & 10.87\\

8/4/2/2 & 11.74 & 9.32 & 9.24 & 10.54 & 9.92\\

128/4/2/2 & 13.96 & 9.34 & 8.78 & 11.22 & 11.04\\

512/4/2/2 & 21.86 & 11.87 & 10.77 & 11.36 & 10.99\\

32/1/2/1 & 8.84 & 9.29 & 9.61 & 11.08 & 10.67\\

32/16/2/2 & 11.84 & 8.64 & 8.77 & 10.19 & 9.51\\

32/64/2/2 & 12.38 & 9.30 & 9.17 & 11.13 & 9.22\\

32/4/8/2 & 15.26 & 8.83 & 8.91 & 11.01 & 9.79\\

32/4/2/4 & 11.07 & 9.96 & 10.01 & 12.05 & 10.86\\

32/1/8/1 & 12.76 & 9.33 & 9.25 & 11.24 & 10.09\\

\bottomrule\end{tabular}\end{table}

Graph edits change the hierarchy outputs; they are not claimed to preserve the entire report between cases. The invariant is the input to matching and its per-rule confirmed/indeterminate hits. Varying $C$ or $Q$ changes dense product dimensions; varying fanout at fixed dimensions changes overlap and sparse work without reducing the dense arithmetic. These generated mappings are not a measured distribution of regulatory crosswalks. Full CPU-direct, unfused-matrix and setup timings remain in the raw data.

\clearpage
\end{document}